\def\code{\texttt}
\def\dali{{Dal\'{\i}}}
\def\sync{\code{sync}}
\def\lincas{\code{lin\_CAS}}
\def\pnew{\code{pnew}}
\def\pretire{\code{pretire}}
\def\pdetach{\code{pdetach}}
\def\pdelete{\code{pdelete}}
\def\beginop{\code{begin\_op}}
\def\endop{\code{end\_op}}
\def\abortop{\code{abort\_op}}
\def\resetop{\code{reset\_op}}
\theoremstyle{plain}
\definecolor{code_bg}{rgb}{0.98,0.98,0.98}
\definecolor{code_hl}{rgb}{1.0,0.98,0.8}
\def\lb{\linebreak[1]}
\def\lb{\linebreak[1]}
\newif\ifverbose \verbosetrue
\newif\iffullver \fullvertrue
\title{Fast Nonblocking Persistence for Concurrent Data Structures}
\author{Wentao Cai\footnotemark}{University of
Rochester, NY, USA}{wcai6@ur.rochester.edu}{}{}
\author{Haosen Wen\footnotemark[\value{footnote}]}{University of
Rochester, USA}{hwen5@ur.rochester.edu}{}{}
\author{Vladimir Maksimovski}{University of
Rochester, USA}{vmaksimo@ur.rochester.edu}{}{}
\author{Mingzhe Du}{University of
Rochester, USA}{mdu5@ur.rochester.edu}{}{}
\author{Rafaello Sanna}{University of
Rochester, USA}{rsanna@ur.rochester.edu}{}{}
\author{Shreif Abdallah}{University of
Rochester, USA}{selsaid@ur.rochester.edu}{}{}
\author{Michael L. Scott}{University of Rochester, USA}{scott@ur.rochester.edu}{}{}
\authorrunning{W. Cai, H. Wen, V. Maksimovski, M. Du, R. Sanna, S.
Abdallah, and M. L. Scott} 
\keywords{Persistent Memory, Nonblocking Progress, Buffered
Durable Linearizability} 
\begin{document}

\maketitle

\footnotetext[\value{footnote}]{The first two authors (Wentao Cai and Haosen Wen) contributed equally to this paper.}



\begin{abstract}
We present a fully lock-free variant of our recent Montage system for
persistent data structures. The variant, nbMontage, adds persistence
to almost any nonblocking concurrent structure without introducing
significant overhead or blocking of any kind. Like its predecessor,
nbMontage is \emph{buffered durably linearizable}: it guarantees that
the state recovered in the wake of a crash will represent a consistent
prefix of pre-crash execution. Unlike its predecessor, nbMontage
ensures wait-free progress of the persistence frontier, thereby
bounding the number of recent updates that may be lost on a crash, and
allowing a thread to force an update of the frontier (i.e., to perform
a \code{sync} operation) without the risk of blocking. As an extra
benefit, the helping mechanism employed by our wait-free \sync\
significantly reduces its latency.

Performance results for nonblocking queues, skip lists, trees, and
hash tables rival custom data structures in
the literature---dramatically faster than achieved with prior
general-purpose systems, and generally within 50\% of equivalent
non-persistent structures placed in DRAM.
\end{abstract}


\section{Introduction}
\label{sec:intro}

With the advent of dense, inexpensive nonvolatile memory (NVM), it is
now feasible to retain pointer-based, in-memory data structures across
program runs and even hardware reboots.  So long as caches remain
transient, however, programs must be instrumented with write-back and
fence instructions to ensure that such structures remain consistent in
the wake of a crash.  Minimizing the cost of this instrumentation
remains a significant challenge.

Section~\ref{sec:related} of this paper surveys both bespoke
persistent data structures and general-purpose systems. Most of these
structures and systems ensure that execution is \emph{durably
linearizable}~\cite{izraelevitz-disc-2016}: not only will every
operation linearize (appear to occur atomically) sometime between its
call and return; each will also \emph{persist} (reach a state in which
it will survive a crash) before returning, and the persistence order
will match the linearization order.  To ensure this agreement between
linearization and persistence, most existing work relies on locks to
update the data structure and an undo or redo log together,
atomically.

Unfortunately, strict durable linearizability forces expensive
write-back and fence instructions onto the critical path
of every operation.  A few data structures---e.g., the
\dali\ hash table of Nawab et al.~\cite{nawab-disc-2017} and the InCLL
MassTree of Cohen et al.~\cite{cohen-asplos-2019}---reduce the cost of
instrumentation by supporting only a relaxed, \emph{buffered}
variant~\cite{izraelevitz-disc-2016} of durable linearizability.
Nawab et al.\ dub the implementation technique \emph{periodic
persistence}. In the wake of post-crash recovery, a buffered durably
linearizable structure is guaranteed to reflect some prefix of the
pre-crash linearization order---but with the possibility that recent
updates may be lost.

Independently, some data structures---notably, the B-tree variants of
Yang et al.~\cite{yang-fast-2015}, Oukid et
al.~\cite{oukid-sigmod-2016}, and Chen et al.~\cite{chen-vldb-2020},
and the hash set of Zuriel et al.~\cite{zuriel-oopsla-2019}---reduce
the overhead of persistence by writing back and fencing only the data
required to rebuild a semantically equivalent structure during
post-crash recovery. A mapping, for example, need only persist a pile
of key-value pairs. Index structures, which serve only to increase
performance, can be kept in faster, volatile memory.

Inspired by \dali\ and the InCLL MassTree, our group introduced the
first \emph{general purpose} system~\cite{wen-icpp-2021} for
buffered durably linearizable data structures.  Like the
sets of the previous paragraph, \emph{Montage}
facilitates persisting only semantically essential
\emph{payloads}.  Specifically, it employs a global \emph{epoch
clock}.  It tracks the semantically essential updates performed in
each operation and ensures that (1) no operation appears to span an
epoch boundary, and (2) no essential update fails to persist for two
consecutive epochs.  If a crash occurs in epoch $e$, Montage recovers
the abstract state of the structure from the end of epoch $e-2$.

In return for allowing a bit of work to be lost on a crash, and for
being willing to rebuild, during recovery, any portion of a structure
needed only for good performance, users of Montage can expect to
achieve performance limited primarily by the read and write latency of
the underlying memory---not by instrumentation overhead.  When a
strict persistence guarantee is required (e.g., before printing a
completion notice on the screen or responding to a client over the
network), Montage allows an application thread to invoke an explicit
\sync\ operation---just as it would for data in any conventional file
or database system.

Unfortunately, like most general-purpose persistence systems,
Montage relies on locks at critical points in the code.  All of the
reported performance results are for lock-based data structures.  More
significantly, while the system can support nonblocking structures,
progress of the persistence frontier is fundamentally blocking.
Specifically, before advancing the epoch from $e$ to $e+1$, the system
waits for all operations active in epoch $e-1$ to complete. If one of
those operations is stalled (e.g., due to preemption), epoch advance
can be indefinitely delayed.  This fact implies that the \sync\
operation cannot be nonblocking.  It also implies that the size of the
history suffix that may be lost on a crash cannot be bounded \emph{a
priori}.

While blocking is acceptable in many circumstances, nonblocking
structures have compelling advantages.
\iffullver
They are immune to deadlocks
caused by lock misordering or priority inversion, and to performance
anomalies and high tail latency due to inopportune preemption.
Lock-free and wait-free nonblocking structures~\cite[Sec.\
3.7]{herlihy-shavit-book} are also immune to livelock. By assigning a
coherent abstract state to every reachable concrete state, nonblocking
data structures simplify recovery~\cite{cai-ismm-2020} and eliminate
the need to perform logging for the sake of failure
atomicity~\cite{david-atc-2018}. As long envisioned in the theory
community, they may even facilitate data sharing among processes that
fail independently~\cite{hedayati-atc-2019-hodor}.

\fi
In the original paper on durable linearizability, Izraelevitz et
al.~\cite{izraelevitz-disc-2016} presented a mechanical transform that
will turn any linearizable nonblocking concurrent data structure into
an equivalent persistent structure.  Unfortunately, this transform is
quite expensive, especially for highly concurrent programs: it litters
the code with write-back and fence instructions.  More efficient
strategies for several specific data structures, including
queues~\cite{friedman-ppopp-2018} and hash tables~\cite{chen-atc-2020,
zuriel-oopsla-2019}, have also been developed by hand.

Friedman et al.~\cite{friedman-pldi-2020} observed that many
nonblocking operations begin with a read-only ``traversal'' phase in
which instrumentation can, with care, be safely elided; applying this
observation to the transform of Izraelevitz et al.\ leads to
substantially better performance in many cases, but the coding process
is no longer entirely mechanical. Ramalhete et
al.~\cite{ramalhete-dsn-2019} and Beadle et
al.~\cite{beadle-qstm-2020} present nonblocking persistent software
transactional memory (STM) systems, but both have fundamental serial
bottlenecks that limit scalability.

In this paper, we extend Montage to produce the first
\emph{general-purpose, high-performance system for nonblocking
periodic persistence}. Our \emph{nbMontage} allows programmers, in a
straightforward way, to convert most wait-free or lock-free
linearizable concurrent data structures into fast, equivalent
structures that are lock-free and buffered durably linearizable.
(Obstruction-free structures can also be converted; they remain
obstruction free.) Like its predecessor, nbMontage requires every
nonblocking update operation to linearize at a compare-and-swap (CAS)
instruction that is guaranteed, prior to its execution, to constitute
the linearization point if it succeeds. (Read-only operations may
linearize at a \code{load}.)  Most nonblocking structures in the
literature meet these requirements.

Unlike its predecessor, nbMontage provides a wait-free \sync. Where
the original system required all operations in epoch $e-1$ to complete
before the epoch could advance to $e+1$, nbMontage allows
pending operations to remain in limbo: in the wake of a crash or of
continued crash-free execution, an operation still pending at the end
of $e-1$ may or may not be seen, sometime later, to have linearized in
that epoch. In addition, where the original Montage required
operations to accommodate ``spurious'' CAS failures caused by epoch
advance, nbMontage retries such CASes internally, without compromising
lock freedom. These changes were highly nontrivial:
they required new mechanisms to register (announce) pending
updates; distinguish, in recovery, between registered and linearized
updates; indefinitely delay (and reason about) the point at which an
update is known to have linearized; and avoid work at \sync\ time for
threads that have nothing to persist.

We have applied nbMontage to Michael \& Scott's
queue~\cite{michael-podc-1996}, Natarajan \& Mittal's binary search
tree~\cite{natarajan-ppopp-2014}, the rotating skip list of Dick et
al.~\cite{dick-rotating-2016}, Michael's hash
table~\cite{michael-spaa-2002}, and Shalev \& Shavit's extensible hash
table~\cite{shalev-acm-2006}. All conversions were
straightforward---typically less than 30 lines of code. By persisting
only essential data and avoiding writes-back and fences on the
critical path, nbMontage structures often approach or outperform not
only their equivalents in the original (blocking) Montage but also
their transient equivalents when data is placed (without any
algorithmic changes) in slower, nonvolatile memory.

Summarizing contributions:
\begin{enumerate}
\item
    We introduce nbMontage, the first general system for
    nonblocking periodic persistence.
\item
    We tailor the nbMontage API to nonblocking data
    structures.  With this new API, conversion of existing nonblocking
    structures for persistence is straightforward.
\item
    We argue that nbMontage provides buffered durable linearizability and
    wait-free \sync\ for compatible data structures, while still
    preserving safety and liveness.
\item
    We compare the performance of nbMontage structures both to their
    original, transient versions and to hand-crafted alternatives from the
    literature, running on a recent Intel server with Optane NVM\@. Our
    results confirm exceptional throughput, responsive \sync, negligible
    overhead relative to the original Montage, and reasonable recovery
    latency.
\end{enumerate}

\section{Related Work}
\label{sec:related}

The past decade has seen extensive work on persistent data structures,
much of it focused on B-tree indices for file systems and databases~\cite{%
venkataraman-fast-2011, chen-vldb-2015, yang-fast-2015,
oukid-sigmod-2016, kim-tos-2018, hwang-fast-2018, liu-icpp-2019,
chen-vldb-2020, cha-tos-2020}. Other work has targeted
queues~\cite{friedman-ppopp-2018}, RB trees~\cite{wang-tos-2018},
radix trees~\cite{lee-fast-2017}, and hash
tables~\cite{zuriel-oopsla-2019, schwalb-imdm-2015,
nawab-disc-2017,chen-atc-2020}.

In recent years, \emph{durable linearizability} has emerged as the
standard correctness criterion for such
structures~\cite{izraelevitz-disc-2016, friedman-ppopp-2018,
zuriel-oopsla-2019, memaripour-asplos-2020}. This criterion builds on
the familiar notion of linearizability~\cite{herlihy-toplas-1990} for
concurrent (non-persistent) structures.  A structure is said to be
linearizable if, whenever threads perform operations concurrently, the
effect is as if each operation took place, atomically, at some point
between its call and return, yielding a history consistent with the
sequential semantics of the abstraction represented by the structure.
A persistent structure is said to be \emph{durably linearizable} if
(1) it is linearizable during crash-free execution, (2) each operation
persists (reaches a state that will survive a crash) at some point
between its call and return, and (3) the order of persists matches the
linearization order.

In addition to custom data structures, several groups have developed
general-purpose systems to ensure the failure atomicity of lock-based
critical sections~\cite{hsu-eurosys-2017, chakrabarti-oopsla-2014,
izraelevitz-asplos-2016, liu-micro-2018} or spec\-u\-la\-tive
transactions~\cite{volos-asplos-2011, coburn-asplos-2011,
charzistergiou-vldb-2015, giles-msst-2015, correia-spaa-2018,
cohen-oopsla-2018, memaripour-iccd-2018, ramalhete-dsn-2019,
gu-atc-2019, beadle-qstm-2020, pavlovic-podc-2018, intel-pmdk,
liu-asplos-2017, genc-pldi-2020}. Like the bespoke structures
mentioned above, all of these systems are durably linearizable---they
ensure that an operation has persisted before returning to the calling
thread.

As noted in Section~\ref{sec:intro}, nonblocking persistent structures
can achieve failure atomicity without the need for logging, since
every reachable concrete state corresponds to a well-defined abstract
state. At the same time, while a lock-based operation can easily
arrange to linearize and persist in the same order (simply by holding
onto the locks needed by any successor operation), a nonblocking
operation becomes visible to other threads the moment it linearizes.
As a result, those other threads must generally take care to ensure
that anything they read has persisted before they act upon it. Writing
back and fencing read locations is a major source of overhead in the
mechanical transform of Izraelevitz et
al.~\cite{izraelevitz-disc-2016}. Friedman et
al.~\cite{friedman-pldi-2020} avoid this overhead during the initial
``traversal'' phase of certain nonblocking algorithms. David et
al.~\cite{david-atc-2018} avoid redundant writes-back and fences in
linked data structures by \emph{marking} each updated pointer in one
of its low-order bits.  A thread that persists the pointer can use a
CAS to clear the bit, informing other threads that they no longer need
to do so.

In both blocking and nonblocking structures, the overhead of
persistence can be reduced by observing that not all data are
semantically meaningful.  In any implementation of a set or mapping,
for example, the items or key-value pairs must be persistent, but the
index structure can (with some effort) be \emph{rebuilt} during
recovery. Several groups have designed persistent B-trees, lists, or
hash tables based on this observation~\cite{yang-fast-2015,
oukid-sigmod-2016, chen-vldb-2020, mahapatra-2019}.

Ultimately, however, any data structure or general-purpose system
meeting the strict definition of durable linearizability will
inevitably incur the overhead of at least one persistence fence in
every operation~\cite{cohen-lowerbound-2018}. For highly concurrent
persistent structures, this overhead can easily double the latency of
every operation.
Similar observations, of course, have applied to I/O
operations since the dawn of electronic computing, and are routinely
addressed by completing I/O in the background.  For data structures
held in NVM, \emph{periodic persistence}~\cite{nawab-disc-2017}
provides an analogous ``backgrounding'' strategy, allowing a structure
to meet the more relaxed requirements of \emph{buffered} durable
linearizability---specifically, the state recovered after a crash is
guaranteed to represent a prefix of the linearization order of
pre-crash execution.

Nawab et al.~\cite{nawab-disc-2017} present a lock-based hash table,
\dali, that performs updates only by pre-pending to per-bucket lists,
thereby creating a revision history (deletions are effected by
inserting ``anti-nodes'').  A clever ``pointer-rotation'' strategy
records, for each bucket, the head nodes of the list for each of the
past few values of a global \emph{epoch clock}.  At the end of each
coarse-grain epoch, the entire cache is flushed.  In the wake of a
crash, threads ignore nodes prepended to hash bucket lists during the
two most recent epochs. No other writes-back or fences are required.
Cohen et al.~\cite{cohen-asplos-2019} also flush the cache at global
epoch boundaries, but employ a clever system of
\emph{in-cache-line-logging} (InCLL) to retain the epoch number and
beginning-of-epoch value for every modified leaf-level pointer in a
variant of the Masstree structure~\cite{mao-eurosys-2012}. In the wake
of a crash, threads use the beginning-of-epoch value for any pointer
that was modified in the epoch of the crash.

Both \dali\ and InCLL employ techniques that might be applicable in
principle to other data structures.  To the best of our knowledge,
however, Montage~\cite{wen-icpp-2021} is the only existing
general-purpose system for buffered durable linearizable structures.
It also has the advantage of persisting only semantically essential
data.  As presented, unfortunately, it provides only limited support
for nonblocking data structures, and its attempts to advance the epoch
clock can be arbitrarily delayed by stalled worker threads. Our
nbMontage resolves these limitations, allowing us to provide a
wait-free \sync\ operation and to bound the amount of work that may be
lost on a crash. It also provides a substantially simpler API\@.

\section{System Design}

\subsection{The Original Montage}\label{sec:montage} 

Semantically essential data
in Montage resides in \emph{payload} blocks in NVM\@.  Other data may
reside in transient memory.  The original system
API~\cite{wen-icpp-2021} is designed primarily for lock-based data
structures, but also includes support for nonblocking operations (with
blocking advance of the persistence frontier).  The typical operation
is bracketed by calls to \beginop\ and \endop.  In between, reads and
writes of payloads use special \emph{accessor} (\code{get} and
\code{set}) methods.

Internally, Montage maintains a slow-ticking \emph{epoch clock}.  In
the wake of a crash in epoch $e$, the Montage recovery procedure
identifies all and only the payloads in existence at the end of epoch
$e-2$.  It provides these, through a parallel iterator, to the
restarted application, which can then rebuild any needed transient
structures. Accessor methods allow payloads that were created in the
current epoch to be modified in place, but they introduce significant
complexity to the programming model.

Payloads are created and deleted with \pnew\ and \pdelete.  These
routines are built on a modified version of our
Ralloc~\cite{cai-ismm-2020} persistent allocator.  In the original
Ralloc, a tracing garbage collector was used in the wake of a crash to
identify and reclaim free blocks.  In the Montage version, epoch
\emph{tags} in payloads are used to identify all and only those blocks
created and not subsequently deleted at least two epochs in the past.
To allow a payload to survive if a crash happens less than two epochs
after a deletion, deletions are implemented by creating
\emph{anti-nodes}.  These are automatically reclaimed, along with their
corresponding payloads, after two epochs have passed.

To support nonblocking operations, the original Montage provides a
\code{CAS\_verify} routine that succeeds only if it can do so in the
same epoch as the preceding \beginop.  If \code{CAS\_verify} fails, the
operation will typically start over; before doing so, it should call
\abortop\ instead of \endop, allowing the system to clean up without
persisting the operation's updates.

Regardless of persistence, nodes removed from a nonblocking structure
typically require some sort of \emph{safe memory reclamation}
(SMR)---e.g., epoch-based reclamation~\cite{fraser-thesis-2004} or
hazard pointers~\cite{michael-tpds-2004}---to delay the reuse of memory
until one can be sure that no thread still holds a transient reference.
In support of SMR, the original Montage provides a \pretire\ routine
that creates an anti-node to mark a payload as semantically deleted,
and no new operation can reference this payload.
In the absence of crashes, Montage will automatically reclaim the
payload and its anti-node 2--3 epochs after SMR calls \pdelete. In the
event of a crash, however, if two epochs have elapsed since \pretire, the
existence of the anti-node allows the Montage recovery procedure to
avoid a memory leak by reclaiming the retired payload even in the
absence of \pdelete. This is safe since the epoch of the \pretire\ 
is persisted, and all operations with references to this payload are
gone after the crash.
Significantly, since a still-active operation will
(in the original Montage) prevent the epoch from advancing far enough to
persist anything in its epoch, \pretire\ can safely be called after the
operation has linearized, so long as it has not yet called \endop.

When a program needs to ensure that operation(s) have persisted (e.g.,
before printing a confirmation on the screen or responding to a client
over the network), Montage allows it to invoke an explicit \sync.  The
implementation simply advances the epoch from $e$ to $e+2$ (waiting
for operations in epochs $e-1$ and $e$ to complete).
The two-epoch convention
avoids the need for quiescence~\cite{nawab-disc-2017}: a
thread can advance the epoch from $e$ to $e+1$ while other threads are
still completing operations that will linearize in $e$.

The key to buffered durable linearizability is to ensure that every
operation that updates payloads linearizes in the epoch with which
those payloads are tagged.  Each epoch boundary then captures a prefix
of the data structure's linearization order. Maintaining this
consistency is straightforward for lock-based operations.  In the
nonblocking case, \code{CAS\_verify} uses a variant of the
double-compare-single-swap (DCSS) construction of Harris
al.~\cite{harris-disc-2002} 
\iffullver 
(App.~\ref{app:dcss}) 
\else
(see App.\ B in \href{https://arxiv.org/abs/2105.09508}{Full
Version} for complete pseudocode) 
\fi
to confirm the
expected epoch and perform a conditional update, atomically.
Unfortunately, the fact that an epoch advance from $e$ to $e+1$ must
wait for operations in $e-1$ means that even if a data structure
remains nonblocking during crash-free execution, the progress of
persistence itself is fundamentally blocking.  This in turn implies
that calls to \sync\ are blocking, and that it is not possible,
\emph{a priori}, to bound the amount of work that may be lost on a
crash.

\lstdefinestyle{myCustomCppStyle}{
  language=C++,
  stepnumber=1,
  tabsize=1,
  showspaces=false,
  showstringspaces=false,
  xleftmargin=0pt,
  basicstyle=\scriptsize\ttfamily\selectfont,
  numberstyle=\ttfamily\tiny,
  escapeinside=||,
  keywordstyle=\bfseries\color{red!40!black},
  commentstyle=\itshape\color{green!40!black},
  columns=fullflexible,
}

\begin{figure}[!t]
\begin{lstlisting}[style=myCustomCppStyle]  
class PBlk; // Base class of all Payload classes
class Recoverable { // Base class of all persistent objects
  template <class payload_type> payload_type* pnew(...); // Create a payload block
  void pdelete(PBlk*); // Delete a payload; should be called only when safe, e.g., by SMR.
  void pdetach(PBlk*); // Mark payload for retirement if operation succeeds
  void sync(); // Persist all operations that happened before this call
  vector<PBlk*>* recover(); // Recover and return all recovered payloads
  void abort_op(); // Optional routine to abandon current operation
};
template <class T> class CASObj { // Atomic CAS object that provides load and CAS
  /* Epoch-verifying linearization method: */
  bool lin_CAS(T expected, T desired) {
    begin_op(); // write back or delete old payloads as necessary; tag new ones
    while (1) { // iterations can be limited for liveness
      ... // main body of DCSS
      switch (DCSS_status) {
        case COMMITTED: end_op(); return true; // clean up metadata
        case FAILED: abort_op(); return false; // untag payloads and clear retirements; clean up metadata
        case EPOCH_ADVANCED: reset_op(); // update and reuse payloads and retirements
      }
    }
  /* Non-verifying atomic methods: */
  T load();  void store(T desired);  bool CAS(T expected, T desired);
};
\end{lstlisting}
  \vspace{-3ex}
  \captionsetup{justification=centering}
  \caption{C++ API of nbMontage, largely inherited from the original Montage. }
  \vspace{-3ex}
  \label{fig:api}
\end{figure}

Note, however, that since
\code{CAS\_verify} will succeed only in the expected epoch, any
nonblocking operation that lags behind an epoch advance is doomed to
fail and retry in a subsequent epoch. There is no need to wait for it
to resume, explicitly fail, and unregister from the old epoch in
\abortop.  Unfortunately, the waiting mechanism is deeply embedded in
the original Montage implementation---e.g., in the implementation of
\pretire, as noted above. Overall, there are four nontrivial issues
that must be addressed to build nbMontage:
\begin{description}
  \item[(Sec.~\ref{sec:pblk})] {\label{issue:pblk}
    Every operation must register its pending updates 
    (both new and to-be-deleted payloads)
    before reaching its linearization point, so that an
    epoch advance can help it to persist even if it stalls immediately
    after the linearization point.
  }
  \item[(Sec.~\ref{sec:rec})] {\label{issue:rec}
    The recovery procedure must be able to distinguish an epoch's
    ``real'' payloads and anti-nodes from those that may have been
    registered for an operation that failed due to a CAS conflict or
    epoch advance.
  }
  \item[(Sec.~\ref{sec:buf})] {\label{issue:buf}
    The buffering containers that record persistent blocks to be
    written back or deleted need a redesign, in order to accommodate an
    arbitrary number of epochs in which operations have not yet noticed
    that they are doomed to fail and retry in a new epoch.
  }
  \item[(Sec.~\ref{sec:adv})] {\label{issue:adv}
    An epoch advance must be able to find and persist any blocks
    (payloads or anti-nodes) that were created in the previous epoch but
    have not yet been written back and fenced.  If \sync\ is to be fast,
    this search must avoid iterating over all active threads.
  }
\end{description}

\subsection{nbMontage API}\label{sec:api}

As shown in Figure~\ref{fig:api}, the nbMontage API reflects three
major changes.  First, because the epoch can now advance from $e$ to
$e+1$ even when an operation is still active in $e-1$, we must
consider the possibility that a thread may remove a payload from the
structure, linearize, and then stall.  If a crash occurs two epochs
later, we must ensure that the removed payload is deleted during
post-crash recovery, to avoid a memory leak: post-linearization
\pretire\ no longer suffices.  Our nbMontage therefore replaces
\pretire\ with a \pdetach\ routine that must be used to register
to-be-deleted payloads \emph{prior} to linearization.  As in the
original Montage, deletion during crash-free execution is the
responsibility of the application's SMR\@.

Second, again because of nonblocking epoch advance, nbMontage requires
that payloads visible to more than one thread be treated as immutable.
Updates always entail the creation of new payloads; \code{get} and
\code{set} accessors are eliminated.

Third, in a dramatic simplification, nbMontage replaces the
original \beginop, \endop, and \code{CAS\_verify} with a new \lincas\
(linearizing CAS) routine. (The \abortop\ routine is also rolled into
\lincas, but remains in the API so operations can call it explicitly
if they choose, for logical reasons, to start over.) When \lincas\ is
called, all payloads created by the calling thread since its last
completed operation (and not subsequently deleted) will be tagged with
the current epoch, $e$.  All anti-nodes stemming from \pdetach\ calls
made since the last completed operation (and not corresponding to
payloads created in that interval) will likewise be tagged with $e$.
The \lincas\ will then attempt a DCSS and, if the current epoch is
still $e$ and the specified location contains the expected value, the
operation will linearize, perform the internal cleanup previously
associated with \endop, and return \code{true}. If the DCSS fails
because of a conflicting update, \lincas\ will perform the internal
cleanup associated with \abortop\ and return \code{false}. If the DCSS
fails due to epoch advance, \lincas\ will update the operation's
payloads and anti-nodes to the new epoch and retry.  By ensuring,
internally, that the epoch never advances unless some thread has
completed an operation 
\iffullver
(App.~\ref{app:progress}),
\else
(App.\ C in \href{https://arxiv.org/abs/2105.09508}{Full
Version}),
\fi
nbMontage can ensure
that some thread makes progress in each iteration of the retry loop.

Programmers using nbMontage are expected to obey the following
constraints:
\begin{enumerate}
\item\label{constraint:ori-lin}
    Each nbMontage data structure $R$ must be designed to be nonblocking
    and linearizable during crash-free execution when nbMontage is
    disabled. Specifically, $R$ must be linearizable when (a)
    \pnew\ and \pdelete\ are ordinary \code{new} and \code{delete}; (b)
    \pdetach\ and \sync\ are no-ops; and (c) \code{CASObj} is \code{std::atomic}
    and \lincas\ is ordinary \code{CAS}.
\item\label{constraint:lincas}
    Every update operation of $R$ linearizes in a pre-identified
    \lincas---one that is guaranteed, before the call, to comprise the
    operation's linearization point if it succeeds.  Any
    operation that conflicts with or depends upon this update must access
    the \lincas's target location. Read-only operations may linearize at a
    \code{load}.
\item\label{constraint:immutability}
    Once attached to a structure (made visible to other threads), a
    payload is immutable.  Changes to a structure are made only by
    adding and removing payloads.
\item\label{constraint:payloads}
    The semantics of each operation are fully determined by the set of
    payloads returned by \pnew\ and/or passed to \pdetach\ prior
    to \lincas.
\newcounter{lastconstraintA}
\setcounter{lastconstraintA}{\theenumi}
\end{enumerate}

Pseudocode for nbMontage's core classes and methods appears in
Figure~\ref{fig:pseudo}; these are discussed and referred to by
pseudocode line numbers in the following subsections.
Appendix~\ref{app:mmm_hash} presents the
changes required to port Maged Michael's lock-free hash
table~\cite{michael-spaa-2002} to nbMontage.

\subsection{Updates to Payloads}\label{sec:pblk}

To allow the epoch clock to advance without blocking, nbMontage
abandons in-place update of payloads.  It interprets \pdetach\ as
requesting the creation of an \emph{anti-node}. An anti-node shares an
otherwise unique, hidden ID with the payload being detached. Newly created
payloads and anti-nodes are buffered until the next \lincas\ in their
thread.  If the \lincas\ succeeds, the buffered nodes will be visible
to epoch advance operations, and will persist even if the creating
thread has stalled.

In the pseudocode of Figure~\ref{fig:pseudo}, calls to \pnew\ and
\pdetach\ are held in the \code{allocs} and \code{detaches}
containers. Anti-nodes are created, and both payloads and anti-nodes
are tagged, in \beginop\
(lines~\ref{code:begin-detaches-begin}--\ref{code:begin-allocs-end},
called from within \lincas). If the \lincas\ fails due to conflict,
\abortop\ resets \pnew-ed payloads so they can be reused in the
operation's next attempt
(lines~\ref{code:abort-allocs-begin}--\ref{code:abort-allocs-end}); it
also withdraws \pdetach\ requests, allowing the application to detach
something different the next time around
(lines~\ref{code:abort-detaches-begin}--\ref{code:abort-detaches-end}).
If attempts occur in a loop (as is common), the programmer may call
\pnew\ outside the loop and \pdetach\ inside, as shown in
Figure~\ref{fig:example} (App.~\ref{app:mmm_hash}). 
If an operation no
longer needs its \pnew-ed payloads (e.g., after a failed insertion),
it may call \pdelete\ to delete them; this automatically erases them
from \code{allocs} (line~\ref{code:pdelete-erase-allocs}). The
internal \resetop\ routine serves to update and reuse both payloads
and anti-nodes in anticipation of retrying a DCSS that fails due to
epoch advance
(lines~\ref{code:reset-op-begin}--\ref{code:reset-op-end}).

\SetKwRepeat{Do}{do}{while}
\SetKw{Or}{or}
\SetKw{And}{and}
\SetKw{Return}{return}
\SetKw{Delete}{delete}
\SetKw{Throw}{throw}
\SetKw{Sfence}{sfence}
\SetKw{Local}{\texttt{thread\_local}}
\SetKwProg{Fn}{Function}{}{}
\SetKwProg{Macro}{Macro}{}{}
\SetKwProg{Class}{Class}{}{}
\SetKwProg{Struct}{Struct}{}{}

\begin{figure*}[!ht]
  \setlength{\columnsep}{0pt}
  \begin{algorithm}[H]
  \scriptsize
  \DontPrintSemicolon 
  \begin{multicols}{2}
  \Struct{\textup{\texttt{CircBuffer}}}{\label{code:cb-begin}
    uint64 cap\;
    atomic<uint64> pushed,popped\;
    PBlk* blks[cap]\;
    \Fn{\textup{\texttt{push(PBlk* item)}}}{\label{code:push-begin}
      cpush=pushed.load()\;
      cpop=popped.load()\;
      \If{\textup{\texttt{cpush$\geq$cpop+cap}}}{
        clwb(blks[cpop\%cap])\;
        popped.CAS(cpop,cpop+1)\;
      }
      blks[cpush\%cap]=item\;
      pushed.store(cpush+1)\tcp*[l]{single producer}
    }\label{code:push-end}
    \Fn{\textup{\texttt{pop\_all()}}}{\label{code:pop-all-begin}
      cpop=popped.load()\;\label{code:pop-all-retry}
      cpush=pushed.load()\;
      \If{\textup{\texttt{cpop==cpush}}}{
        return\;
      }
      \ForEach{\textup{\texttt{i from cpop to cpush}}}{
        break if i\%cap reaches cpop\%cap twice\;
        clwb(blks[i\%cap])\;
      }
      popped.CAS(cpop,cpush)\;
    }\label{code:pop-all-end}
  }\label{code:cb-end}
  atomic<uint64> global\_epoch\;
  Mindicator mindi\;
  \Local uint64 e\_curr,e\_last\;
  \Local vector<PBlk*> allocs,detaches\;
  \Local int tid\;
  int thd\_cnt\tcp*[l]{number of threads}
  CircBuffer TBP[thd\_cnt][4]\;\label{code:tbp}
  \Struct{\textup{\texttt{Desc}}}{\label{code:desc-begin}
    uint64 old,new,epoch=0\;
    uint64 type=DESC\; \label{code:desc-type}
    uint64 tid\_sn=0\;\label{code:desc-tid-sn}
    \tcp{64 bits for ref to CASObj \\64 for cnt, with last 2 for status}
    atomic<uint128> r\_c\_s\;
    \Fn{\textup{\texttt{reinit(uint64 e)}}}{
      tid\_sn++\;
      r\_c\_s.store(0)\;
      epoch=e\;
    }
  }\label{code:desc-end}
  \Struct{\textup{\texttt{PBlk}}}{
    (void* vtable)\;
    PBlk* anti=NULL\;
    uint64 epoch=0\;
    uint64 type=\{PAYLOAD,ANTI\}\;\label{code:pblk-type}
    uint64 tid\_sn=0,blk\_uid=0\;\label{code:pblk-tid-sn}
    \Fn{\textup{\texttt{setup(uint64 t,Desc* desc,uint64 e)}}}{\label{code:pblk-setup-begin}
      type=t\;
      epoch=desc?0:desc.epoch\;\label{code:setup-epoch}
      tid\_sn=desc?0:desc.tid\_sn\;\label{code:setup-sn}
      TBP[tid][e\%4].push(this)\;
    }
    \Fn{\textup{\texttt{destructor()}}}{\label{code:pblk-destructor-begin}
      epoch=0\;
      clwb(this)
    }\label{code:pblk-destructor-end}
  }
  Desc descs[thd\_cnt]\;
  vector<PBlk*> TBF[thd\_cnt][4]\;\label{code:tbf}
  \Fn{\textup{\texttt{pdelete(PBlk* pblk)}}}{\label{code:pdelete-begin}
    e=global\_epoch.load()\;
    allocs.erase(pblk)\tcp*[l]{no-op if pblk not in allocs}\label{code:pdelete-erase-allocs}
    TBF[tid][(e+1)\%4].insert(pblk$\rightarrow$anti)\;
    TBF[tid][e\%4].insert(pblk)\;
  }\label{code:pdelete-end}
  \Fn{\textup{\texttt{begin\_op(bool reset=false)}}}{
    e\_curr=global\_epoch.load()\;\label{code:load-epoch}
    \If{\textup{\texttt{e\_last<e\_curr}}}{\label{code:begin-tbp-begin}
      TBP[tid][e\_last\%4].pop\_all()\;\label{persist-beginop}
      clwb(descs[t])\;\label{code:beginop-desc-flush}
      update t's val to e\_curr in mindi\;\label{code:beginop-update-advance}
    }
    descs[tid].reinit(e\_curr)\;\label{code:desc-reinit}
    \For{\textup{\texttt{i from e\_last-1 to min(e\_last+1,e\_curr-2)}}}{\label{code:begin-tbf-begin}
      delete all items in TBF[tid][i\%4]\;\label{code:tbf-pop-all}
      sfence\;\label{code:begin-tbf-end}
    }
    \ForEach{\textup{\texttt{r in detaches}}}{\label{code:begin-detaches-begin}
      \If(\tcp*[h]{default branch}){\textup{\texttt{!reset}}}{
        allocate an anti-node anti for r\;
        r$\rightarrow$anti=anti\;
        r$\rightarrow$anti$\rightarrow$blk\_uid=r$\rightarrow$blk\_uid\;
      }
      r$\rightarrow$anti$\rightarrow$setup(ANTI,descs[tid],e\_curr)\;
    }
    \ForEach{\textup{\texttt{p in allocs}}}{
      p$\rightarrow$setup(PAYLOAD,descs[tid],e\_curr)\;
    }\label{code:begin-allocs-end}
  }
  \Fn{\textup{\texttt{end\_op()}}}{
    detaches.clear()\;
    allocs.clear()\;
    e\_last=e\_curr\;
    e\_curr=0\;
  }
  \Fn{\textup{\texttt{abort\_op(bool reset=false)}}}{
    \If(\tcp*[h]{to reuse anti-nodes}){\textup{\texttt{reset}}}{
      \ForEach{\textup{\texttt{r in detaches}}}{\label{code:reset-detaches-begin}
        r$\rightarrow$anti$\rightarrow$setup(ANTI,NULL,e\_curr)\;
      }
    }\Else(\tcp*[h]{default branch:delete anti-nodes}){
      \ForEach{\textup{\texttt{r in detaches}}}{\label{code:abort-detaches-begin}
        delete(r$\rightarrow$anti)\;\label{code:anti-reset}
        r$\rightarrow$anti=NULL\;
      }
      detaches.clear()\;\label{code:abort-detaches-end}
    }
    \ForEach{\textup{\texttt{p in allocs}}}{\label{code:abort-allocs-begin}
      p$\rightarrow$setup(PAYLOAD,NULL,e\_curr)\;\label{code:p-reset}
    }\label{code:abort-allocs-end}
    e\_last=e\_curr\;
    e\_curr=0\;
  }
  \Fn{\textup{\texttt{reset\_op()}}}{\label{code:reset-op-begin}
    abort\_op(true)\;
    begin\_op(true)\;
  }\label{code:reset-op-end}
  \Fn{\textup{\texttt{advance()}}}{\label{code:advance-begin}
    e=global\_epoch.load()\;
    \ForEach{\textup{\texttt{t in mindi whose val==e-1}}}{\label{code:advance-loop}
      TBP[t][(e-1)\%4].pop\_all()\;\label{code:persist-advance}
      clwb(descs[t])\;\label{code:advance-desc-flush}
      update t's val to e in mindi\;\label{code:advance-update-advance}
    }
    sfence\;
    \If{some op linearized in e-1 or e}{\label{code:increment-epoch-condition}
      global\_epoch.CAS(e,e+1)\;\label{code:increment-epoch}
    }
    \vspace{-1ex}
  }\label{code:advance-end}
  \end{multicols}
  \vspace{-5ex}
  \end{algorithm}
  \caption{nbMontage pseudocode.}
  \label{fig:pseudo}
  \vspace{-3ex}
\end{figure*}

\subsection{CAS and Recovery}\label{sec:rec}

The implementation of \lincas\ employs an array of persistent
\emph{descriptors}, one per thread.  These form the basis of the DCSS
construction~\cite{harris-disc-2002} 
\iffullver
(App.~\ref{app:dcss}). 
\else
(App.\ B in \href{https://arxiv.org/abs/2105.09508}{Full Version}).
\fi
Each descriptor records CAS parameters (the old value, new value, and
CAS object); the epoch in which to linearize; and the status of the
CAS itself (in progress, committed, or
failed---lines~\ref{code:desc-begin}--\ref{code:desc-end}). After a
crash, the recovery procedure must be able to tell when a block in NVM
(a payload or anti-node) appears to be old enough to persist, but
corresponds to an operation that did not commit. Toward that end, each
block contains a 64-bit field that encodes the thread ID and a
monotonic serial number; together, these constitute a unique
\emph{operation ID} (lines~\ref{code:desc-tid-sn}
and~\ref{code:pblk-tid-sn}). At the beginning of each operation
attempt, \beginop\ updates the descriptor, incrementing its serial
number (line~\ref{code:desc-reinit}).  Previous uses of the descriptor
with smaller serial numbers are regarded as having committed; blocks
corresponding to those versions remain valid unless they are deleted
or reinitialized (lines~\ref{code:tbf-pop-all},~\ref{code:anti-reset},
and~\ref{code:p-reset}). Deleting or reinitializing a persistent block
resets its epoch to zero and registers it to be written back in the
current epoch
(lines~\ref{code:pblk-setup-begin}--\ref{code:pblk-destructor-end}).
Registration ensures that resets persist, in \beginop, \emph{before the
next update to the descriptor}
(lines~\ref{code:begin-tbp-begin}--\ref{code:desc-reinit}). During an
epoch advance from $e$ to $e+1$, the descriptors of operations in
$e-1$ are written back (at line~\ref{code:advance-desc-flush}) to
ensure that their statuses reach NVM before the update of the global
epoch clock.

Informally, an nbMontage payload is said to be \emph{in use} if it has
been created and not yet detached by linearized operations.
Identifying such payloads precisely is made difficult by the existence
of \emph{pending} operations---those that have started but not yet
completed, and whose effects may not yet have been seen by other
threads. In the wake of a crash in epoch $e$, nbMontage runs through
the Ralloc heap, assembling a set of potentially allocated blocks and
finding all CAS descriptors (identified by their \code{type}
fields---lines~\ref{code:desc-type} and~\ref{code:pblk-type}). By
matching the serial numbers and thread IDs of blocks and descriptors,
the nbMontage-internal recovery procedure identifies all and only the
payloads that are known, as of the crash, to have been in use at the
end of epoch $e-2$. Specifically, if block $B$ has thread ID $t$,
serial number $s$, and epoch tag $f\!$, nbMontage will recover $B$ if
and only if
\begin{enumerate}
  \item $0 < f \le e-2$\label{blk-rec:2};
  \item $(s < \mathtt{descs}[t].\mathtt{sn}) \lor
  (s = \mathtt{descs}[t].\mathtt{sn} \land \mathtt{descs}[t].\mathtt{status} =
      \mathtt{COMMITTED})$; and\label{blk-rec:5}
  \item if $B$ is a payload, it has not been canceled by an in-use
      anti-node.\label{blk-rec:6}
\end{enumerate}

Once the in-use blocks have been identified, nbMontage returns them to
a data-structure-specific recovery routine that rebuilds any needed
transient state, after which the state of the structure is guaranteed
to reflect some valid linearization of pre-crash execution through the
end of epoch $e-2$.

\subsection{Buffering Containers}\label{sec:buf} 

Persistent blocks created or deleted in a given epoch will be recorded
in thread- and epoch-specific to-be-persisted (TBP) and to-be-freed
(TBF) containers. Every thread maintains four statically allocated
instances of each (only 3 are actually needed, but indexing is faster
with 4---Fig.~\ref{fig:pseudo}, lines~\ref{code:tbp}
and~\ref{code:tbf}).

TBPs are fixed-size circular buffers.  When a buffer is full, its
thread removes and writes back a block before inserting a new one.  In
the original version of Montage, epoch advance always occurs in a
dedicated background thread (the \sync\ operation handshakes with this
thread). As part of the advance from epoch $e$ to $e+1$, the
background thread iterates over all worker threads $t$, waits for $t$
to finish any active operation in $e-1$, extracts all blocks from
TBP$[t][(e-1)\!\!\!\mod 4]$, and writes those blocks back to memory.

Insertions and removals from a TBP buffer never occur concurrently in
the original version of Montage.  In nbMontage, however, an operation
that is lagging behind may not yet realize that it is doomed to retry,
and may still be inserting items into the buffer when another thread
(or several other threads!) decide to advance the epoch. The lagging
thread, moreover, may even be active in epoch $e-1-4k$, for some $k>0$
(lines~\ref{code:anti-reset} and~\ref{code:p-reset}). This concurrency
implies that TBPs need to support single-producer-multiple-consumers
(SPMC) concurrency. Our implementation of the SPMC buffer
(lines~\ref{code:cb-begin}--\ref{code:cb-end}) maintains two monotonic
counters, \code{pushed} and \code{popped}. To insert an item, a thread
uses \code{store} to increment \code{pushed}. To remove some item(s),
a thread uses \code{CAS} to increase \code{popped}. For simplicity,
the code exploits the fact that duplicate writes-back are semantically
harmless: concurrent removing threads may iterate over overlapping
ranges (lines~\ref{code:pop-all-begin}--\ref{code:pop-all-end}).

TBFs are dynamic-size, thread-unsafe containers implemented as vectors
(line~\ref{code:tbf}). 
Although deletion
must respect epoch ordering, it can be performed lazily during
crash-free execution, with each thread responsible for the blocks in
its own TBFs. In \beginop, \emph{after it has updated its descriptor},
thread $t$ deletes blocks in \mbox{TBF$[t][i\!\!\mod 4]$}, for $i \in
[e_\mathit{last}-1, \min(e_\mathit{last}+1,e_\mathit{curr}-2)]$, where
$e_\mathit{last}$ is the epoch of $t$'s last operation and
$e_\mathit{curr}$ is the epoch of its current operation
(lines~\ref{code:begin-tbf-begin}--\ref{code:begin-tbf-end}).

\subsection{Epoch Advance}\label{sec:adv} 

To make \sync\ nonblocking, we first decentralize the original epoch
advance in Montage so that instead of making a request of some
dedicated thread, every thread is now able to advance the epoch on its
own. In the worst case, such an epoch advance may require iterating
over the TBP buffers of all threads in the system.  In typical cases,
however, many of those buffers may be empty.  To reduce overhead in
the average case, we deploy a variant of Liu et al.'s
\emph{mindicator}~\cite{liu-icdcs-2013} to track the oldest epoch in
which any thread may still have an active operation. Implemented as a
wait-free, fixed-size balanced tree, our variant represents each
thread and its current epoch as a leaf.  An ancestor in the tree
indicates the minimum epoch of all nodes in its subtree.  When thread
$t$ wishes to advance the epoch from $e$ to $e+1$, it first checks to
see whether the root of the mindicator is less than $e$.  If so, it
scans up the tree from its own leaf until it finds an ancestor with
epoch $<e$.  It then reverses course, traces down the tree to find a
lagging thread, persists its descriptor and any blocks in the
requisite TBP container, and repeats until the root is at least $e$.  
When multiple threads call \sync\ concurrently, this
nearest-common-ancestor strategy allows the work of persistence to be
effectively parallelized.  Experiments described in
Section~\ref{sec:sync-exp} confirm that our use of the mindicator,
together with the lazy handling of TBF buffers
(Section~\ref{sec:buf}), leads to average \sync\ times on the order of a
few microseconds.

\renewcommand{\proofname}{Proof (sketch)}

\section{Correctness}

We argue that nbMontage preserves the linearizability and lock freedom
of a structure implemented on top of it,
and adds buffered durable linearizability.  We also argue that advances
of the persistence frontier in nbMontage are wait free.

\subsection{Linearizability}

\begin{theorem}\label{theorem:lin}
Suppose that $R$ is a data structure obeying the constraints of
Section~\ref{sec:api}, running on nbMontage, and that $K$ is realizable
concrete history of $R$.  $K$ is linearizable.
\end{theorem}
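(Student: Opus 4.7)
The plan is to reduce linearizability under nbMontage to linearizability of the underlying non-persistent structure, which is given by constraint~\ref{constraint:ori-lin}. Concretely, I would define a projection $\pi(K)$ of the concrete history $K$ that strips away all nbMontage-internal events (payload allocations, descriptor updates, TBP/TBF manipulations, epoch-clock reads and writes, mindicator traffic) and retains only the events on \code{CASObj} locations together with the calls and returns of $R$'s operations. The goal is to argue that $\pi(K)$ is a realizable concrete history of $R$ with nbMontage disabled; constraint~\ref{constraint:ori-lin} then supplies a linearization, and I would show it transfers back to~$K$ because the projection preserves operation call/return boundaries and per-location real-time order.

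The main work is justifying that $\pi(K)$ is a valid history of the disabled structure. For the non-verifying primitives (\code{load}, \code{store}, \code{CAS}) this is immediate since they are literally the \code{std::atomic} operations. The interesting case is \lincas. Here I would invoke the linearizability of the underlying DCSS construction of Harris et al.~\cite{harris-disc-2002} (App.~\ref{app:dcss}) to argue that each successful \lincas\ is indistinguishable, to any concurrent \code{load}/\code{store}/\code{CAS}/\lincas\ on the same \code{CASObj}, from a single atomic CAS on the target location with the operation's \code{old} and \code{new} values. For a \lincas\ that returns \code{false} due to a value mismatch, the DCSS either performs no effective write or appears as a failed CAS, which is also indistinguishable from a plain CAS that fails. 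The retries caused by \code{EPOCH\_ADVANCED} stay entirely inside \lincas\ and, by constraint~\ref{constraint:payloads}, do not change the set of payloads the operation ultimately commits to; so they are not visible to the structure's logic. Hence the \code{CASObj}-level sub-history obtained from \lincas\ calls is indistinguishable from what a plain \code{CAS} would have produced.

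Having established that $\pi(K)$ is realizable with nbMontage disabled, constraint~\ref{constraint:ori-lin} gives a linearization order $\sigma$ for $\pi(K)$. I would then lift $\sigma$ back to $K$ by placing every update operation at the real-time instant of its successful \lincas's internal DCSS commit (guaranteed by constraint~\ref{constraint:lincas} to be a valid linearization point), and every read-only operation at its specified \code{load}. Real-time order is preserved because projection does not move operation endpoints, and the chosen linearization points lie between each operation's call and return. Consistency with the sequential specification follows from $\sigma$ being a linearization of $\pi(K)$, together with constraint~\ref{constraint:immutability} (which ensures that the abstract state visible at a linearization point is a function only of the \code{CASObj} contents and attached payloads).

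The hard part will be rigorously justifying the indistinguishability claim for \lincas\ in the presence of concurrent plain \code{CAS}es and \lincas es on the same \code{CASObj}: one must check that the DCSS's use of descriptor-pointer sentinels does not leak into $\pi(K)$ in a way that lets another thread observe a ``non-CAS'' intermediate value. I expect this to reduce to a direct appeal to the linearizability proof of the DCSS primitive, but it must be explicitly noted that nbMontage's version of DCSS differs from Harris et al.'s only in the addition of the epoch check (which either commits or causes the internal retry described above) and in persistence-related writes-back that are invisible to the projected sub-history.
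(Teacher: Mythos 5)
Your proposal is correct and takes essentially the same route as the paper: both reduce to constraint~\ref{constraint:ori-lin} by arguing that replacing \pnew, \pdelete, \pdetach, \sync, and \lincas\ with \code{new}, \code{delete}, no-ops, and plain \code{CAS} yields a realizable concrete history of the original structure, whose linearization then transfers back. Your version merely spells out in more detail the step the paper asserts in one line---that \lincas\ is observationally equivalent to an atomic \code{CAS} via the DCSS construction---which is a worthwhile elaboration but not a different argument.
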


\begin{proof}
The \pnew, \pdelete, and \lincas\ routines of nbMontage have
the same semantics as \code{new}, \code{delete}, and \code{CAS}
calls in the original data structure.  The \code{pdetach} routine has no
semantic impact on crash-free execution: it simply ensures that a block
whose removal has linearized will be reclaimed in post-crash recovery.
The \code{sync} routine, similarly, has no semantic impact---with no
parameters and no return values, it can linearize anywhere.
If the instructions comprising each call to \pnew, \pdelete, \pdetach,
\sync, and \lincas\ in a concrete nbMontage history are replaced with
those of
\code{new}, \code{delete}, \code{no-op}, \code{no-op}, and \code{CAS},
respectively, the result will be a realizable concrete history of the
original data structure.  Since that history is linearizable, so is the
one on nbMontage.
\end{proof}

\subsection{Buffered Durable Linearizability}
\label{sec:proof-BDL}

As is conventional, we assume that each concurrent data structure
implements some abstract data type.  The semantics of such a type are
defined in terms of legal \emph{abstract sequential
  histories}---sequences of operations (request-response pairs), with
their arguments and return values.  We can define the \emph{abstract
  state} of a data type, after some finite sequential history $S$, as
the set of possible extensions to $S$ permitted by the type's semantics.
In a \emph{concurrent} abstract history $H$, invocations and responses
may be separated, and some responses may be missing, in which case the
invocation is said to be \emph{pending} at end of $H.$
$H$ is said to be \emph{linearizable} if (1) there exists a history $H'$
obtained by dropping some subset of the pending invocations in $H$ and
adding matching responses for the others, and (2) there exists a
sequential history $S$ that is equivalent to $H'$ (same invocations and
responses) and that respects both the \emph{real-time} order of $H'$ and
the semantics of the abstract type.  $S$ is said to be a
\emph{linearization of} $H.$

Suppose now that $R$ is a linearizable nonblocking implementation of
type $T,$ and that $r$ is a concrete state of $R$---the bits in memory
at the end of some concrete (instruction-by-instruction) history $K$\@.
For $R$ to be correct there must exist a mapping
$\mathcal{M}$ such that for any such $K$ and $r,$ $\mathcal{M}(r)$ is
the abstract state that results from performing, in order, the
abstract operations corresponding to concrete operations that have
linearized in $K$.

A structure $R$ is \emph{buffered durably linearizable} if post-crash
recovery always results in some concrete state $s$ that is justified by
some prefix $P$ of pre-crash concrete execution---that is, there exists
a linearization $S$ of the abstract history corresponding to $P$ such
that $\mathcal{M}(s)$ is the abstract state produced by $S$.

Consider again the \thelastconstraintA\ constraints listed at the end of
Section~\ref{sec:api} for data structures running on nbMontage.
Elaborating on constraints~\ref{constraint:immutability}
and~\ref{constraint:payloads}, we use $r|_p$ to denote the set of
payloads that were created (and inserted) and not yet detached by the 
operations that generated $r$.  This allows us to recast
constraint~\ref{constraint:payloads} and to add an additional
constraint:
\begin{enumerate}
\setcounter{enumi}{\thelastconstraintA}
\item[\textbf{\thelastconstraintA\rlap{$'$}.}]\label{constraint:pstate-mapping}
    There exists a mapping $\mathcal{Q}$ from sets of payloads to states
    of $T$ such that for any concrete state $r$ of $R$,
    $\mathcal{M}(r) = \mathcal{Q}(r|_p)$.
\item\label{constraint:recover-with-pstate}
    The recovery procedure of $R$, given a set of in-use payloads $p$,
    constructs a concrete state~$s$ such that $\mathcal{M}(s) =
    \mathcal{Q}(p)$.
\addtocounter{enumi}{-\value{lastconstraintA}}
\addtocounter{enumi}{-1}
\refstepcounter{enumi}
\label{extraconstraints}
\end{enumerate}

\begin{theorem}\label{theorem:BDL} 
If a crash happens in epoch $e$, $R$ will recover to a concrete
state $s$ such that $\mathcal{M}(s)$ is the abstract
state produced by some linearization $S$ of the abstract history $H$
comprising pre-crash execution through the end of epoch $e-2$.
In other words, $R$ is buffered durably linearizable.
\end{theorem}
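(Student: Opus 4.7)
The plan is to construct a specific linearization $S$ of the abstract history $H$ (pre-crash execution through end of epoch $e-2$), show that the blocks nbMontage's recovery procedure returns are exactly the in-use payload set produced by executing $S$, and then use constraints~\ref{constraint:pstate-mapping} and~\ref{constraint:recover-with-pstate} to conclude that $\mathcal{M}(s)$ agrees with the abstract state produced by $S$.

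First, I would define $S$ to contain exactly those operations whose DCSS descriptor is durably \code{COMMITTED} at crash time with payloads tagged with an epoch $f$ in $(0,e-2]$, ordered primarily by $f$ and secondarily by the real-time order in which their descriptors commit. Using Theorem~\ref{theorem:lin} together with the monotonicity of the global epoch clock and constraint~\ref{constraint:lincas} (so every update linearizes at a pre-identified \lincas), one checks that $S$ is a legal linearization of the prefix of $H$ obtained by dropping those pending operations whose descriptor did not commit with tag $\le e-2$.

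The main step, and the main obstacle, is an ordering lemma analogous to the one sketched in the commented-out passage earlier: if some valid linearization orders $o_1 \prec o_2$ and $o_2 \in S$, then $o_1 \in S$. The argument splits on whether $o_1$ and $o_2$ contend on the same \code{CASObj}: if they do, linearizability forces $o_1$'s descriptor to commit before $o_2$'s DCSS reads the shared word; if they do not, $o_1$ must return before $o_2$'s \beginop\ reads the epoch clock at line~\ref{code:load-epoch}, so $o_1$'s tag cannot exceed $o_2$'s. Durability of $o_1$'s \code{COMMITTED} status in NVM then follows from the epoch-advance discipline: at every epoch transition, line~\ref{code:persist-advance} flushes the TBP of each lagging thread and line~\ref{code:advance-desc-flush} \code{clwb}s its descriptor, followed by an \code{sfence} before the CAS at line~\ref{code:increment-epoch}; and any thread beginning its next operation flushes its own prior descriptor at line~\ref{code:beginop-desc-flush}. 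The delicate point here is the SPMC concurrency on TBPs described in Section~\ref{sec:buf}: one must verify that the mindicator traversal visits every lagging producer, so that no block emitted by an in-$S$ operation is left unpersisted before the global epoch passes $e-1$, even when threads have been stalled multiple epochs.

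Finally, I would check the three recovery conditions of Section~\ref{sec:rec}. Condition~\ref{blk-rec:2} restricts attention to blocks tagged in $(0,e-2]$; condition~\ref{blk-rec:5} keeps a block iff the thread's serial number is strictly smaller than its descriptor's, or equal and \code{COMMITTED}, which by the ordering lemma selects exactly the blocks emitted by operations of $S$; condition~\ref{blk-rec:6} cancels payloads against their in-use anti-nodes, yielding $P|_p$, where $P$ is the concrete state produced by executing $S$ on $R$. Constraint~\ref{constraint:recover-with-pstate} then gives $\mathcal{M}(s)=\mathcal{Q}(P|_p)$, and constraint~\ref{constraint:pstate-mapping} gives $\mathcal{Q}(P|_p)=\mathcal{M}(P)$, so $\mathcal{M}(s)$ is the abstract state produced by the linearization $S$, establishing buffered durable linearizability.
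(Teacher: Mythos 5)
Your proposal is correct and follows essentially the same route as the paper's own (sketched) proof: both identify the recovered linearization with the set of operations whose descriptors committed with epoch tags in $(0,e-2]$ (allowing pending operations to ``linearize in the past''), both rely on the epoch-advance flushing discipline and the three recovery conditions of Section~\ref{sec:rec}, and both close the argument via constraints~\ref{constraint:pstate-mapping} and~\ref{constraint:recover-with-pstate}. Your explicit ordering lemma is just a more detailed rendering of the paper's one-sentence appeal to constraint~\ref{constraint:lincas} and the helping mechanism to guarantee prefix-closure.
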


\begin{proof}
For purposes of this proof, it is convenient to say that an update
operation that commits the descriptor of its \lincas\
linearizes on the \emph{preceding load} of the global epoch
clock---the one that double-checks the clock before commit. Under this
interpretation, if $r$ is the concrete state of memory at the end of
epoch $e-2$, we can say that $\mathcal{M}(r)$ reflects a sequential
history containing all and only those operations that have committed
their descriptors 
\iffullver
(line~\ref{code:committing-CAS} in
Fig.~\ref{fig:dcss} of App.~\ref{app:dcss})
\else
(line 17 in Fig.\ 8 in
\href{https://arxiv.org/abs/2105.09508}{Full Version})
\fi
by the end of the epoch.
But this is not the only possible linearization of execution to that
point!  In particular, any operation that has loaded
\code{global\_epoch} (line~\ref{code:load-epoch} in
Fig.~\ref{fig:pseudo}), initialized its descriptor
(line~\ref{code:desc-reinit} of Fig.~\ref{fig:pseudo}), and installed
that descriptor in a \code{CASObj} 
\iffullver
(line~\ref{code:install-desc} of
Fig.~\ref{fig:dcss}) 
\else
(line 71 in
Fig.\ 8 in \href{https://arxiv.org/abs/2105.09508}{Full
Version}) 
\fi
but has not yet committed the descriptor may
``linearize in the past'' (i.e., in epoch $e-2$) if it \emph{or
another, helping operation} commits the descriptor in the future.  When a
crash occurs in epoch $e$, any such retroactively linearized
operations will see their payloads and anti-nodes included in the
state $s$ recovered from the crash. $\mathcal{M}(s)$ will therefore
correspond, by constraint~\ref{constraint:recover-with-pstate}, to the
linearization of execution through the end of epoch $e-2$ that
includes all and only those pending operations that have linearized by
the time of the crash. Crucially, if operation $b$ depends on
operation $a$, in the sense that $a$ has completed in any extension of
$H$ in which $b$ has completed, then, by
constraint~\ref{constraint:lincas} of Section~\ref{sec:api}, the
helping mechanism embodied by \lincas\ ensures that if $b$'s payloads
and anti-nodes are included in $s$, $a$'s are included also.
%
%
\end{proof}


\subsection{Wait-free Persistence}
\begin{theorem}\label{theorem:wait-free-advance}
    The epoch advance in nbMontage is wait free.
\end{theorem}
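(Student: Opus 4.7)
The plan is to show that each invocation of \code{advance} completes in a number of its own steps bounded by a function of only the static thread count \code{thd\_cnt} and the static TBP capacity \code{cap}, independent of the progress or scheduling of any other thread.

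First, I would dispatch the non-loop portions of \code{advance}. The initial \code{load} of \code{global\_epoch} (line~\ref{code:advance-begin}), the concluding \code{sfence}, and the single, unretried CAS at line~\ref{code:increment-epoch} are each $O(1)$. I would then bound the outer loop (line~\ref{code:advance-loop}). Because the mindicator is a fixed-size balanced tree with at most \code{thd\_cnt} leaves, locating the next lagging leaf (or certifying that none remains) takes $O(\log \code{thd\_cnt})$ wait-free operations by Liu et al.'s construction~\cite{liu-icdcs-2013}. Each processed thread has its leaf lifted to $e$ at line~\ref{code:advance-update-advance}; since a leaf's value is monotonically nondecreasing (it is only ever set to a loaded value of \code{global\_epoch}, which itself only grows), that leaf cannot reappear as lagging. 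Hence the outer loop runs at most \code{thd\_cnt} times.

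Next, I would show that each iteration body is wait free. The \code{clwb} at line~\ref{code:advance-desc-flush} and the mindicator update at line~\ref{code:advance-update-advance} are bounded. The hard part, and where my plan will focus, is the SPMC \code{pop\_all} at line~\ref{code:persist-advance}: the producer thread may concurrently push into the same buffer, and other advancing threads may concurrently pop from it. My plan is to exploit the explicit break condition: after snapshotting \code{cpop} and \code{cpush}, the inner loop iterates \code{i} from \code{cpop} to \code{cpush} but exits as soon as \code{i\%cap} revisits \code{cpop\%cap} a second time. Since \code{i\%cap} ranges over only \code{cap} distinct positions, this caps the iteration count at $\code{cap}+1$, with each iteration performing a constant-time \code{clwb}. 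The trailing CAS on \code{popped} is attempted once and not retried; failure can be caused only by a helpful concurrent consumer and does not threaten termination, since duplicate writes-back are semantically harmless.

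Composing these bounds yields $O(\code{thd\_cnt}\cdot(\log \code{thd\_cnt} + \code{cap}))$ steps for \code{advance}, independent of how other threads are scheduled, which establishes wait freedom. The only subtle point I expect to have to justify carefully is the SPMC analysis of \code{pop\_all}; everything else reduces to already-wait-free primitives (atomic \code{load}, single-shot CAS, and mindicator operations) applied a statically bounded number of times.
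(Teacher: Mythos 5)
Your proposal is correct and follows essentially the same decomposition as the paper's own proof: bound the outer loop by the thread count via the fixed-size mindicator, bound the per-thread persistence work by the static TBP capacity, and account separately for the mindicator updates. The one place you go beyond the paper's sketch---arguing that \code{pop\_all} terminates in at most $\code{cap}+1$ iterations under concurrent pushes and pops because of the explicit break condition, with the trailing CAS attempted only once---is exactly the detail the paper compresses into ``each TBP container has bounded size, so all the data of a thread can be persisted in $O(1)$ time,'' so your elaboration is a welcome refinement rather than a divergence.
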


\begin{proof}
As shown in Fig.~\ref{fig:pseudo}, an epoch advance from $e$ to $e+1$
repeatedly finds a thread $t$ that may still be active in $e-1$
(line~\ref{code:advance-loop}), persists the contents of its TBP
container and its descriptor
(lines~\ref{code:persist-advance}--\ref{code:advance-desc-flush}), and
updates its mindicator entry.
In the worst case, identifying all threads with blocks to be persisted
requires time $O(T)$, where $T$ is the number of threads, since the
total size of the mindicator is roughly $2T$ nodes.
Since each TBP container has bounded size, all the data of a thread
can be persisted in $O(1)$ time.
Mindicator updates, worst case, take $O(T \log T)$ time.

Since \sync\ advances the epoch at most twice, it, too,
is wait free.
\end{proof}

\subsection{Lock freedom}
\begin{theorem}
    nbMontage preserves lock freedom during crash-free execution.
\end{theorem}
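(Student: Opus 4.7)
The plan is to decompose the argument into the various nbMontage primitives invoked by an operation of $R$, showing that each either makes progress on its own or makes progress at the cost of other threads completing their own operations. Because $R$ is lock-free by constraint~\ref{constraint:ori-lin}, it suffices to show that whenever a thread stalls inside an nbMontage call for arbitrarily long, some other thread has completed an operation in the meantime.

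I would first dispatch the easy cases. The routines \pnew\ and \pdelete\ are thin wrappers around Ralloc, whose allocation/deallocation paths are themselves lock-free. The \pdetach\ call merely appends to the thread-local \code{detaches} vector and thus completes in a bounded number of its own steps. The \sync\ routine is wait-free by Theorem~\ref{theorem:wait-free-advance}. The SPMC circular buffer used for TBPs and the mindicator used for epoch advance use only loads, stores, and CAS with bounded retry loops (and no waiting on any other thread), so each of their methods terminates in a bounded number of its own steps.

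The heart of the argument is \lincas. One call to \lincas\ consists of a \beginop, a DCSS attempt, and then either a commit path (\endop), a conflict path (\abortop), or a retry path through \resetop. The \beginop, \endop, \abortop, and \resetop\ routines perform bounded work aside from the lock-free allocator and buffer calls already handled. The underlying DCSS is the standard lock-free helping construction of Harris et al.: any thread that observes another thread's descriptor installed in the \code{CASObj} helps that descriptor commit or fail, so the DCSS itself contributes only lock-free helping work per per-operation step. Thus the only way \lincas\ can fail to return after finitely many steps is to traverse the \code{EPOCH\_ADVANCED} branch infinitely often.

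The main obstacle is therefore ruling out an unbounded retry loop inside \lincas. I would appeal to the internal invariant referenced in the paper (and proven in App.~\ref{app:progress}) that \code{global\_epoch} is advanced from $e$ to $e+1$ only after some operation has linearized in epoch $e-1$ or $e$, as enforced by the guarded CAS at line~\ref{code:increment-epoch} inside \code{advance}. Each iteration of \lincas's retry loop is triggered by a distinct epoch advance observed by the calling thread, and each such advance is witnessed by at least one completed operation by some thread in the system. Hence any infinite suffix of retries in thread $t$ corresponds to infinitely many completed operations in other threads, which is exactly lock freedom. Composing this with the bounded-step behavior of the remaining primitives and the assumed lock freedom of $R$, we conclude that the whole system remains lock-free during crash-free execution.
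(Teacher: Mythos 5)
Your proposal is correct and follows essentially the same route as the paper: identify the \lincas\ epoch-advance retry loop as the only new source of unbounded looping (everything else being bounded or covered by Theorem~\ref{theorem:wait-free-advance}), and then invoke the guarded epoch advance of App.~\ref{app:progress}---the epoch moves from $e$ to $e+1$ only if some operation succeeded in epoch $e-1$ or $e$---so that infinitely many retries imply infinitely many completed operations. You simply spell out in more detail what the paper's one-paragraph sketch leaves implicit.
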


\begin{proof}
Given Theorem~\ref{theorem:wait-free-advance}, the only additional loop
introduced by nbMontage is the automatic retry that occurs inside
\lincas\ when the epoch has advanced.
While this loop precludes wait freedom, we can 
\iffullver
(with a bit of effort---see App.~\ref{app:progress})
\else
(with a bit of effort---see App.\ C in \href{https://arxiv.org/abs/2105.09508}{Full Version})
\fi
arrange to advance the epoch from
$e$ to $e+1$ only if some update operation has
linearized in epoch $e-1$ or $e$ 
(line~\ref{code:increment-epoch-condition} in Fig.~\ref{fig:pseudo}).
This suffices to preserve lock freedom.
As a corollary, a data structure that is obstruction free remains so
when persisted with nbMontage.
\end{proof}

\section{Experimental Results}

To confirm the performance benefits of buffering, we constructed
nbMontage variants of Michael \& Scott's
queue~\cite{michael-podc-1996}, Natarajan \& Mittal's binary search
tree~\cite{natarajan-ppopp-2014}, the rotating skip list of Dick et
al.~\cite{dick-rotating-2016}, Michael's chained hash
table~\cite{michael-spaa-2002}, and the resizable hash table of Shalev
\& Shavit~\cite{shalev-acm-2006}. Mappings keep their key-value pairs
in payloads and their index structures in transient memory. The queue
uses payloads to hold values and their order.  Specifically, between
its two loads of the queue tail pointer, the enqueue operation calls
\code{fetch\_add} on a global counter to obtain a serial number for
the to-be-inserted value. We benchmarked those data structures and
various competitors on several different workloads. Below are the
structures and systems we tested:
\begin{itemize}
\item \textbf{Montage} and \textbf{nbMontage} -- as described in
previous sections.
\item \textbf{Friedman} -- the persistent lock-free queue of Friedman
et al.~\cite{friedman-ppopp-2018}.
\item \textbf{Izraelevitz} and \textbf{NVTraverse} -- the N\&M tree,
the rotating skip list, and Michael's hash table persisted using
Izraelevitz's transform~\cite{izraelevitz-disc-2016} and the NVTraverse
transform~\cite{friedman-pldi-2020}, respectively. 
\item \textbf{SOFT} and \textbf{NVMSOFT} -- the lock-free hash table
of Zuriel et al.~\cite{zuriel-oopsla-2019}, which persists only
semantic data. SOFT keeps a full copy in DRAM, while
NVMSOFT is modified to keep and access values only in NVM.  Neither
supports \code{update} on existing keys.
\item \textbf{CLevel} -- The persistent lock-free hash table of Chen
et al.~\cite{chen-atc-2020}.
\item \textbf{Dal\'{\i}} -- the lock-based buffered durably
linearizable hash table of Nawab et al.~\cite{nawab-disc-2017}.
\item \textbf{DRAM\,(T)} and \textbf{NVM\,(T)} -- as a baseline for
comparison, these are unmodified transient versions of our various
data structures, with data located in DRAM and NVM, respectively.
\end{itemize}


\subsection{Configurations}

We configured Montage and nbMontage with 64-entry TBP buffers and an
epoch length of 10\,ms.  In practice, throughput is broadly
insensitive to TBP size, and remains steady with epochs as short as
100\,\textmu s. All experiments were conducted on an Intel Xeon Gold
6230 processor with 20 physical cores and 40 hyperthreads, six 128\,GB
Optane Series 100 DIMMs, and six 32\,GB DRAMs, running Fedora 30
Kernel 5.3.7 Linux Server. Threads are placed first on separate
physical cores and then on hyperthreads.  NVM is configured through
the \code{dax-ext4} file system in ``App Direct'' mode.

All experiments use JEMalloc~\cite{evans2006jemalloc} for transient
allocation and Ralloc~\cite{cai-ismm-2020} for persistent allocation,
with the exception of CLevel, which requires the allocator from
Intel's PMDK~\cite{intel-pmdk}. All chained hash tables have 1 million
buckets. The
warm-up phase for mappings inserts 0.5\,M key-value pairs drawn from a
key space of 1\,M keys.  Queues are initialized with 2000 items.
Unless otherwise specified, keys and values are strings of 32 and 1024
bytes, respectively. We report the average of three trials, each of
which runs for 30 seconds. Source code for nbMontage and the
experiments is available at \url{https://github.com/urcs-sync/Montage}.

\subsection{Throughput}\label{sec:throughput} 

Results for queues, binary search trees, skip lists, and hash tables
appear in Figures~\ref{fig:throughput-qts}--\ref{fig:throughput-ht}.
The nbMontage versions of the M\&S queue, N\&M tree, rotating skip
list, and Michael hash table outperform most persistent alternatives
by a significant margin---up to 2$\times$ faster than Friedman et
al.'s queue, 1.3--4$\times$ faster than NVTraverse and Izraelevitz et
al.'s transform, and 3--14$\times$ faster than CLevel and Dal\'{\i}.
Significantly, nbMontage achieves almost the same throughput as
Montage. SOFT and NVMSOFT are the only exceptions: the former benefits
from keeping a copy of its data in DRAM; both benefit from clever
internal optimizations.  The DRAM copy has the drawback of forgoing
the extra capacity of NVM; the optimization has the drawback of
precluding atomic \code{update} of existing keys. While the transient
Shalev \& Shavit (S\&S) hash table (DRAM\,(T)-SS in
Fig.~\ref{fig:throughput-ht}) is significantly slower than the
transient version of Michael's hash table (DRAM\,(T)), the throughput
of the Montage version (nbMontage-SS) is within 65\% of the transient
version and still faster than all other pre-existing persistent
options other than SOFT and NVMSOFT.

\begin{figure*}
    \strut\hfill
    \subfloat[Queues---1:1
    push:pop\label{fig:queues_thread}]%
        {\includegraphics[width=.3\textwidth]%
            {fig/queues_thread}%
        \vspace{-1ex}%
        }
    \hfill
    \subfloat[Binary search trees---2:1:1
    get:insert:remove\label{fig:trees_g50i25r25_thread}]%
        {\includegraphics[width=.3\textwidth]%
            {fig/trees_g50i25r25_thread}%
        \vspace{-1ex}%
        }
    \hfill
    \subfloat[Skip lists---2:1:1
    get:\lb{}insert:\lb{}remove\label{fig:skiplists_thread}]%
        {\includegraphics[width=.3\textwidth]%
            {fig/skiplists_g50i25r25_thread}
        \vspace{-1ex}
        }
    \hfill\strut
    \vspace{-1ex}%
    \caption{Throughput of concurrent queues, binary search trees, and skip lists.}
    \vspace{-3ex}%
    \label{fig:throughput-qts}
\end{figure*}

\begin{figure*}
    \centering
        \strut\hfill
        \subfloat[2:1:1
        get:insert:remove\label{fig:hashtables_g50i25r25_thread}]%
            {\includegraphics[width=.48\textwidth]%
                {fig/hashtables_g50i25r25_thread}%
            }
        \hfill
        \subfloat[18:1:1
        get:insert:remove\label{fig:hashtables_g90i5r5_thread}]%
            {\includegraphics[width=.48\textwidth]%
                {fig/hashtables_g90i5r5_thread}%
            }
        \vspace{-1ex}%
        \caption{Hash table throughput. Options in the left column of
          the key are all variants of Michael's
          nonblocking algorithm.}
        \vspace{-3ex}%
        \label{fig:throughput-ht}
\end{figure*}


\subsection{Overhead of Sync}\label{sec:sync-exp} 

To assess the efficacy of nonblocking epoch advance and of our
mindicator variant, we measured the latency of \sync\ and the throughput
of code that calls \sync\ frequently.  Specifically, using the nbMontage
version of Michael's hash table, on the (2:1:1 get:insert:remove) workload,
we disabled the periodic epoch advance performed by a background thread
and had each worker call \sync\ explicitly.

Average \sync\ latency is shown in Figure~\ref{fig:sync-latency} for
various thread counts and frequencies of calls, on both nbMontage and
its blocking predecessor.  
In all cases, nbMontage completes the typical \sync\ in
1--40\,\textmu s.
In the original Montage, however, \sync\ latency never drops below
5\,\textmu s, and can be as high as 1.3\,ms with high thread count
and low frequency.
 
Hash table throughput as a function of \sync\ frequency is shown in
Figure~\ref{fig:sync-freq} for 40 threads running on Montage and
nbMontage.  For comparison, horizontal lines are shown for various
persistent alternatives (none of which calls \sync). Interestingly,
nbMontage is more than 2$\times$ faster than CLevel even when \sync\
is called in every operation, and starts to outperform NVTraverse once
there are more than about 10 operations per \sync.

\subsection{Recovery}

To assess recovery time, we initialized the nbMontage version of
Michael's hash table with 1--32\,M 1\,KB elements, leading to a total
payload footprint of 1--32\,GB.  With one recovery thread, nbMontage
recovered the 1\,GB data set in 1.4\,s and the 32\,GB data
set in 103.8\,s (22.2\,s to retrieve all in-use blocks; 81.6\,s
to insert them into a fresh hash table).  Eight recovery threads
accomplished the same tasks in 0.3\,s and 17.9\,s.
These times are all within 0.5\,s of recovery times on the original
Montage.

\begin{figure*}
    \centering
    \begin{minipage}[t]{.48\textwidth}
        \centering
        \includegraphics[width=.937\textwidth]%
            {fig/latency}
        \vspace{-1ex}
        \caption{Average latency of \sync\ on hash tables.}
        \label{fig:sync-latency}
    \end{minipage}
    \hfill
    \begin{minipage}[t]{.48\textwidth}
        \centering
        \includegraphics[width=.937\textwidth]%
            {fig/sync}
        \vspace{-1ex}
        \caption{Throughput of hash tables with a \sync\
        every $x$ operations on average.}
        \label{fig:sync-freq}
    \end{minipage}
    \vspace{-3ex}
\end{figure*}

\section{Conclusions}
To the best of our knowledge, nbMontage is the first general-purpose
system to combine buffered durable linearizability with a simple API for
nonblocking data structures and nonblocking progress of the persistence
frontier.  Nonblocking persistence allows nbMontage to provide a very
fast wait-free \sync\ routine and to strictly bound the work that may
be lost on a crash.  Lock-free and wait-free structures, when
implemented on nbMontage, remain lock free; obstruction-free structures
remain obstruction free.

Experience with a variety of nonblocking data structures confirms that
they are easy to port to nbMontage, and perform extremely well---better
in most cases than even hand-built structures that are strictly durably
linearizable.  Given that programmers have long been accustomed to
\sync-ing their updates to file systems and databases, a system with the
performance and formal guarantees of nbMontage appears to be of
significant practical utility.  In ongoing work, we are exploring the
design of a hybrid system that supports both lock-based and nonblocking
structures, with nonblocking persistence in the absence of lock-based
operations.  We also hope to develop a buffered durably linearizable
system for object-based software transactional memory, allowing
persistent operations on multiple data structures to be combined into
failure-atomic transactions.

\bibliographystyle{abbrv}
\bibliography{main}

\appendix
\section{Example nbMontage Data Structure}
\label{app:mmm_hash}

As an example of using the nbMontage API, Fig.~\ref{fig:example}
presents a fragment of Michael's lock-free hash
table~\cite{michael-spaa-2002}, modified for persistence.  Highlighted
parts were changed from the original.

\begin{figure}[h]
\begin{lstlisting}[style=myCustomCppStyle,xleftmargin=12pt,numbersep=4pt,numbers=left]
class MHashTable |\hightlightCode{: public Recoverable}| {
class Payload |\hightlightCode{: public PBlk}| { K key; V val; };
struct Node { // Transient index class
  |\hightlightCode{Payload* payload = nullptr; // Transient-to-persistent pointer}|
  |\hightlightCode{CASObj<Node*> next = nullptr; // Transient-to-transient pointer}|
  Node(K& key, V& val) { |\hightlightCode{payload = pnew<Payload>(key, val);}| }
  ~Node() { |\hightlightCode{if(payload!=nullptr) pdelete(payload);}| }
};
EBRTracker tracker; // Epoch-based memory reclamation
bool find(CASObj<Node*>* &p,Node* &c,Node* &n,K k); // Starting from p, find node >= k and assign to c
void put(K key, V val) { // Insert, or update if the key exists
  Node* new_node = new Node(key, val);
  |\hightlightCode{CASObj<Node*>*}| prev = nullptr;
  Node* curr;
  Node* next;
  tracker.start_op();
  while(true) {
    if (find(prev,curr,next,key)) { // update
      new_node->next.store(curr);
      |\hightlightCode{pdetach(curr->payload);}|
      if(prev->|\hightlightCode{lin\_CAS}|(curr,new_node)) {
        while(!curr->next.CAS(next,mark(next))) next=curr->next.load();
        if(new_node->next.CAS(curr,next)) tracker.retire(curr);
        else find(prev,curr,next,key);
        break;
      }
    } else { // key does not exist; insert
      new_node->next.store(curr);
      if(prev->|\hightlightCode{lin\_CAS}|(curr,new_node))
        break;
    }
  }
  tracker.end_op();
};
\end{lstlisting}
\vspace{-3ex}
\captionsetup{justification=centering}
\caption{Michael's lock-free hash table example (nbMontage-related parts highlighted).}
\label{fig:example}
\vspace{-3ex}
\end{figure}

\iffullver

\section{DCSS}
\label{app:dcss}

Figure~\ref{fig:dcss} provides pseudocode for DCSS-style \lincas.
It supercedes the sketched implementation in Figure~\ref{fig:api}.
CAS-able data is of type \code{CASObj}.  It contains
a 64-bit field that may be either a value or a pointer to a descriptor,
a counter used to avoid ABA problems, and a status field that
differentiates between the value and pointer cases:
\code{INIT} indicates the former; \code{IN\_PROG} the latter.

\begin{figure*}
  \setlength{\columnsep}{0pt}
  \begin{algorithm}[H]
  \scriptsize
  \DontPrintSemicolon 
  \begin{multicols}{2}
  uint2 INIT=0,IN\_PROG=1\;
  uint2 COMMITTED=2,FAILED=3\;
  \Struct{\textup{\texttt{Obj<T>}}}{
    T val\tcp*[l]{64-bit}
    uint62 cnt\;
    uint2 stat\;
  }
  \Struct{\textup{\texttt{CASObj<T>}}}{
    atomic<Obj<T>{}> var\;
  }
  \Struct{\textup{\texttt{Desc}}}{
    uint64 old,new,epoch=0\;
    uint64 type=DESC\;                 
    uint64 tid\_sn=0\;
    atomic<Obj<T>{}> r\_c\_s\;
    \Fn{\textup{\texttt{commit(Obj cur)}}}{\label{code:dcss-commit-begin}
      expected=Obj(cur.val,cur.cnt,IN\_PROG)\;
      desired=Obj(cur.val,cur.cnt,COMMITTED)\;
      r\_c\_s.CAS(expected,desired)\;\label{code:committing-CAS}
    }\label{code:dcss-commit-end}
    \Fn{\textup{\texttt{abort(Obj cur)}}}{
      expected=Obj(cur.val,cur.cnt,IN\_PROG)\;
      desired=Obj(cur.val,cur.cnt,FAILED)\;
      r\_c\_s.CAS(expected,desired)\;
    }
    \Fn{\textup{\texttt{cleanup(Obj cur)}}}{
      tmp=r\_c\_s.load()\;
      \If{\textup{\texttt{tmp.cnt!=cur.cnt or tmp.val!=cur.val\ \ \ \ \ }}}{
        \Return{}
      }
      expected=Obj(this,tmp.cnt,IN\_PROG)\;
      new\_val=tmp.stat==COMMITTED?new:old\;
      desired=Obj(new\_val,tmp.cnt+1,INIT)\;
      CASObj* casobj=tmp.val\;
      casobj$\rightarrow$var.CAS(expected,desired)\;
    }
    \Fn{\textup{\texttt{try\_complete(uint64 addr)}}}{
      cur=r\_c\_s.load()\;
      \If{\textup{\texttt{addr!=cur.val}}}{
        \Return{}
      }
      \If{\textup{\texttt{epoch==global\_epoch.load()}}}{\label{code:dcss-epoch-check}
        commit(cur)\;
      }\Else{
        abort(cur)\;
      }
      cleanup(cur)\;
    }
  }
  \Fn{\textup{\texttt{Obj<T> CASObj<T>::\_load()}}}{
    \Do{\textup{\texttt{r.stat==IN\_PROG}}}{
      r=var.load()\;
      \If{\textup{\texttt{r.stat==IN\_PROG}}}{
        ((Desc*)r.val)$\rightarrow$try\_complete(this)\;\label{code:lin-cas-load-help-complete}
      }
    }
    \Return{\textup{\texttt{r}}}
  }
  \Fn{\textup{\texttt{T CASObj<T>::load()}}}{
    \Return{\textup{\texttt{\_load().val}}}
  }
  \Fn{\textup{\texttt{CASObj<T>::store(T val)}}}{
    \Do{\textup{\texttt{!var.CAS(r,newr)}}}{
      r=var.load()\;
      \If{\textup{\texttt{r.stat==IN\_PROG}}}{
        ((Desc*)r.val)$\rightarrow$try\_complete(this)\;
        r.cnt++\;
        r.stat=INIT\;
      }
      newr=Obj(val,r.cnt+1,INIT)\;
    }
  }
  \Fn{\textup{\texttt{bool CASObj<T>::CAS(T exp,T val)}}}{
    \tcp*[l]{regular CAS without epoch verification}
    r=\_load()\;
    oldr=Obj(exp,r.cnt,INIT)\;
    newr=Obj(val,r.cnt+1,INIT)\;
    \Return{var.CAS(oldr,newr)}\;
  }
  \Fn{\textup{\texttt{bool CASObj<T>::lin\_CAS(T exp,T des)}}}{
    begin\_op()\tcp*[l]{descs[tid].epoch is set here}
    r=\_load()\;\label{code:lin-cas-retry}
    \If{\textup{\texttt{r.val==exp}}}{
      descs[tid].r\_c\_s=Obj(this,r.cnt,IN\_PROG)\;
      descs[tid].old=exp\;
      descs[tid].new=des\;
      newr=Obj(\&descs[tid],r.cnt,IN\_PROG)\;
      \If{\textup{\texttt{var.CAS(r,newr)}}}{\label{code:install-desc}
        descs[tid].try\_complete(this)\;
        \If{\textup{\texttt{COMMITTED==descs[tid].r\_c\_s.stat}}}{
          end\_op()\;
          \Return{true}\;
        }
      }
      \tcp{assume that failed CAS loads new var to r}
      \If{\textup{\texttt{r.val==exp}}}{
        \tcp{failed desc or changed cnt \\
        epoch must have advanced; retry}
        reset\_op()\;
        goto~\ref{code:lin-cas-retry}\;
      }
    }
    abort\_op()\;
    \Return{false}\;
  }
  \end{multicols}
  \vspace{3ex}
  \end{algorithm}
  \caption{DCSS pseudocode, adapted from Harris et al.~\cite{harris-disc-2002}.}
  \label{fig:dcss}
  \vspace{-3ex}
\end{figure*}

Each descriptor includes a similar $\langle$value, counter,
status$\rangle$ field, together with an indication of the old and new
values intended for the corresponding \code{CASObj} and the epoch in
which the change is intended to occur.
In a descriptor, the status field indicates whether an epoch-respecting
CAS is in progress, committed, or failed due to conflict or epoch
advance.
A successful \lincas\ sets up a descriptor, CAS-es a pointer to it into
the to-be-changed \code{CASObj}, double-checks the epoch, CAS-es the
descriptor from \code{IN\_PROG} to \code{COMMITTED}, and then CAS-es the
new value back into the \code{CASObj}.
A conflicting \lincas\ in another thread will try to complete the
descriptor and help update the \code{CASObj} based on the status of
the descriptor.
A successful \lincas\ calls \endop.
A \lincas\ that fails due to conflict calls \abortop.
A \lincas\ that experiences an epoch transition calls \resetop\ and
retries.

Once a thread, $t$, has initialized its \lincas\ descriptor and
installed it in the named \code{CASObj} $j$, arbitrary time may elapse
before $t$ (or another thread) executes the CAS at
line~\ref{code:committing-CAS}.  If that CAS succeeds, the operation $o$
that called \lincas\ can be considered to have linearized at the earlier
\code{load} of \code{global\_epoch} 
at line~\ref{code:dcss-epoch-check}.
If that \code{load} returned $e$ and the CAS succeeds and persists $t$'s
descriptor in epoch $e'$, $o$ will persist when the epoch advances to
$e'+2$, but \emph{it still will have linearized in $e$}---the epoch with
which its payloads were tagged.

It is possible, of course, that a crash will occur before $o$'s
\lincas\ has committed.  Say the crash
occurs in epoch $f \ge e+2$.  If $o$ has not persisted at the crash,
then we know it did not return before the end of epoch $f-2$, implying
that it was pending at the end of epoch~$e$.  Moreover we know
that no other thread will have accessed $j$ ($o$'s \code{CASObj}) before
the end of epoch $f-2$, because it would have completed the CAS at
line~\ref{code:committing-CAS} and $t$'s descriptor should be persisted
by the end of $f-2$.  We are
thus permitted to declare, in recovery, that $o$ did not linearize at
all.  This is true even if $f >\!> e$.


\begin{figure*}
  \setlength{\columnsep}{0pt}
  \begin{algorithm}[H]
  \scriptsize
  \DontPrintSemicolon 
  \SetNoFillComment
  \begin{multicols}{2}
  \Struct{\textup{\texttt{EStat}}}{
    uint63 e\;
    bool status\;
  }
  EStat w[4]\;
  \Fn{\textup{\texttt{end\_op()}}}{
    detaches.clear()\;
    allocs.clear()\;
    \tcc*[l]{new code begins}
    \If{\textup{\texttt{e\_last!=e\_curr}}}{
      \tcp*[l]{first successful op in e\_curr}
      w[e\_curr\%4].CAS($\langle$e\_curr,0$\rangle$,$\langle$e\_curr,1$\rangle$)\;
    }
    \tcc*[l]{new code ends}
    e\_last=e\_curr\;
    e\_curr=0\;
  }
  \Fn{\textup{\texttt{advance()}}}{
    e=global\_epoch.load()\;
    \ForEach{\textup{\texttt{t in mindi whose val==e-1}}}{
      TBP[t][(e-1)\%4].pop\_all()\;
      clwb(descs[t])\;
      update t's val to e in mindi\;
    }
    sfence\;
    \tcc*[l]{new code below}
    \If{\textup{\texttt{w[e\%4]==$\langle$e,0$\rangle$ and w[(e-1)\%4]==$\langle$e-1,0$\rangle$}}}{
      \tcp*[l]{no op succeeds in recent 2 epochs}
      \Return{}
    }
    \If{\textup{\texttt{w[(e+1)\%4]==$\langle$e-3,0$\rangle$}}}{
      \If{\textup{\texttt{w[(e+1)\%4].CAS($\langle$e-3,0$\rangle$,$\langle$e+1,0$\rangle$)}}}{
        goto~\ref{code:advance-epoch}\;
      }
    }
    \If{\textup{\texttt{w[(e+1)\%4]==$\langle$e-3,1$\rangle$}}}{
        w[(e+1)\%4].CAS($\langle$e-3,1$\rangle$,$\langle$e+1,0$\rangle$)\;
    }
    global\_epoch.CAS(e,e+1)\;\label{code:advance-epoch}
  }
  \end{multicols}
  \vspace{3ex}
  \end{algorithm}
  \caption{Pseudocode for \code{advance} and \code{end\_op} with
  logic to skip the epoch change in the absence of a successful
  update operation.}
  \label{fig:epoch-status}
  \vspace{-3ex}
\end{figure*}

\section{Ensuring Progress}
\label{app:progress}

At line~\ref{code:increment-epoch-condition} in
Figure~\ref{fig:pseudo}, to avoid compromising lock freedom,
we assume that an epoch advance happens only when at least one
successful operation has occurred in the two most recent
epochs.  In practice, this will usually be true, given multi-millisecond
epochs and the intended use of \sync.
To enforce such progress, however, the epoch advancer needs to know
whether anything has happened in the two most recent epochs.  To capture
this information, we employ a set of global status words \code{w},
indexed (like TBP and TBF containers) using the low-order bits of the
global epoch.  Each word contains a recent epoch number and a
boolean to indicate progress; each successful
operation tries to CAS \code{w[e\%4]} from \code{$\langle$e,0$\rangle$} to
\code{$\langle$e,1$\rangle$}.
Concurrently, an epoch advance can reuse the status
word of an old epoch by CAS-ing \code{w[e\%4]} from
\code{$\langle$e,\_$\rangle$} to
\code{$\langle$e+4,0$\rangle$}. Pseudocode appears in
Figure~\ref{fig:epoch-status}.

\fi 

\end{document}